\documentclass{article}

\usepackage{amsmath,amsfonts,amssymb,amstext,bbm,amsthm}

\usepackage{natbib}

\theoremstyle{plain}
\newtheorem{theorem}{Theorem}

\newtheorem{lemma}{Lemma}

\newtheorem{proposition}[theorem]{Proposition}

\theoremstyle{definition}

\newtheorem{example}{Example}
\newtheorem{definition}{Definition}

\usepackage{float}
\floatstyle{ruled}
\newfloat{algorithm}{tbp}{loa}
\floatname{algorithm}{Algorithm}

\usepackage{multirow}

\usepackage{balance} 


\usepackage{courier}

\begin{document}

\title{On Monotonicity of Number-Partitioning Algorithms}

\author{Erel Segal-Halevi}

\maketitle

\begin{abstract}
An algorithm for number-partitioning is called value-monotone if whenever one of the input numbers increases, the objective function (the largest sum or the smallest sum of a subset in the output) weakly increases. This note proves that the List Scheduling algorithm and the Longest Processing Time algorithm are both value-monotone. This is in contrast to another algorithm --- MultiFit --- which is not value-monotone.
\end{abstract}

\section{Introduction}
In a \emph{number partitioning} problem, the input is a sequence $\mathbf{x}$ of positive numbers and an integer $m\geq 2$.
The goal is to partition the numbers in $\mathbf{x}$ into $m$ subsets with sums that are as nearly equal as possible. Two common objectives are minimizing the largest sum and maximizing the smallest sum.
This problem is known to be NP-hard even for $m=2$, and has many approximation algorithms.

Interestingly, the study of approximation algorithms for number-partitioning began with the study of \emph{anomalies} of algorithms for job scheduling. In 
a simple job-scheduling instance, the input is a set $S$ of \emph{jobs}, each of which has a pre-specified processing-time, and a set $m$ of identical \emph{machines}. The goal is to schedule each job to one of the machines, such that the latest completion time of a machine (also called the \emph{makespan}) is minimized. This problem is clearly equivalent to  number-partitioning with the objective of minimizing the largest sum.
An \emph{anomaly} in a job-scheduling algorithm is a situation in which the jobs become shorter, but the maximum completion time becomes longer.
The first two papers about approximation algorithms for scheduling 
\citep{graham1966bounds,graham1969bounds} were mainly concerned in bounding the size of such anomalies. 

One scheduling algorithm that exhibits anomalies is the algorithm known as \emph{List Scheduling} (LS). This algorithm handles scheduling problems that are more general than number-partitioning in that jobs may have \emph{depenencies} --- some jobs may need the output of earlier jobs in order to start. 
Algorithm LS processes the jobs in a fixed order. For each job $j\in\{1,\ldots,m\}$, if all dependencies of $j$ have finished, LS schedules $j$ to the first machine that becomes available.

\begin{example}[LS algorithm and its anomalies; based on \citet{graham1969bounds}]
\label{exm:ls}
There are $m=3$ machines. There are nine jobs, their processing-times are
\begin{align*}
30,21,22,20,40,40,40,40,90,
\end{align*}
and the four $40$ jobs depend on the $20$ job.

LS schedules $30, 21, 22$ each on a different machine. 
The $20$ is scheduled on machine \#2, since it is becomes available first. 
The four $40$ jobs have to wait until this job finishes, in time $41$. Meanwhile, LS schedules the $90$ on machine \#1 when it becomes available. The final schedule is:
\begin{itemize}
\item Machine \#1: $30 + 90 = 120$
\item Machine \#2: $21 + 20 + 40 + 40 = 121$
\item Machine \#3: $22 + [19 \text{ idle time}] + 40 + 40 =121$
\end{itemize}
so the makespan is $121$.

Now, suppose all jobs become shorter and their processing-times are
\begin{align*}
22,11,12,10,30,30,30,30,80,
\end{align*}
where the four $30$ jobs still depend on the $10$ job.
LS schedules $22, 11, 12$ each on a different machine. 
The $10$ is scheduled on machine \#2. 
It finishes in time $21$, when both machines \#2 and \#3 are available. So the first two $30$ jobs are scheduled to these machines, and the third one is scheduled to machine \#1. 
Now, machines \#2 and \#3 become available before machine \#1, so the last two jobs are assigned to them. The final schedule is:
\begin{itemize}
\item Machine \#1: $22 + 30 = 52$
\item Machine \#2: $11 + 10 + 30 + 30 = 81$
\item Machine \#3: $12 + [9 \text{ idle time}] + 30 + 80 =131$
\end{itemize}
so the makespan is $131$ --- an increase of $10$!
\qed
\end{example}
While dependencies are relevant in job scheduling, they are less relevant for general number partitioning. 
This raises the question of whether LS has similar anomalies when there are no dependencies.%
\footnote{
One can check that, in Example \ref{exm:ls} without dependencies, there are no anomalies --- the initial makespan is 160 and the makespan after the decrease is 131.
} Strangely, I could not find an answer to this question in the large literature on scheduling and number partitioning.
Without dependencies, algorithm LS can be described simply as:
\begin{itemize}
\item Initialize $m$ empty bins.
\item For each number $x_i$ in the input sequence:
\\
put $x_i$ in a bin in which the current sum is smallest
\\
(if there are ties, use the bin with the smallest index).
\end{itemize}
To see that the question is not trivial, consider another example.
\begin{example}[LS algorithm for number-partitioning]
Suppose $m=2$ and the input sequence is $18, 10, 6, 4$.
Then LS returns the partition $(18), (10+6+4)$ with sums $18,20$.

Now, suppose the third input increases and the new sequence is $18, 10, 9, 4$. 
Then LS returns the partition $(18+4, 10+9)$ with sums $22,19$. The second sum decreased! However, both the maximum and the minimum sums increased.
\end{example}

To state the question formally, some definitions are needed.
Given an algorithm $A$, denote by $A(\mathbf{x},m)$ the output of $A$ on inputs $\mathbf{x}$ and $m$. 
Given two sequences $\mathbf{x} = (x_1,\ldots,x_n)$ and $\mathbf{y} = (y_1,\ldots,y_n)$, the notation $\mathbf{x}\geqq \mathbf{y}$ means that $x_i\geq y_i$ for all $i\in[m]$.

\begin{definition}
An algorithm $A$ for number-partitioning is called \emph{value-monotone} if, for any integer $m\geq 2$ and for any two input sequences $\mathbf{x},\mathbf{y}$ such that $\mathbf{x}\geqq \mathbf{y}$, 
the maximum subset sum in $A(\mathbf{x},m)$
is at least 
the maximum subset sum in $A(\mathbf{y},m)$,
and 
the minimum subset sum in $A(\mathbf{x},m)$
is at least 
the minimum subset sum in $A(\mathbf{y},m)$.
\end{definition}

This note proves the following.
\begin{theorem}
\label{thm:mon-ls}
Algorithm LS for number-partitioning (identical-machines scheduling with no job dependencies)
is value-monotone.
\end{theorem}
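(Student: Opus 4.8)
The plan is to reduce the general componentwise comparison to a chain of single-coordinate increases, and then to track how the two executions of LS evolve in lockstep. Since $\mathbf{x}$ and $\mathbf{y}$ have the same length $n$ and $\mathbf{x}\geqq\mathbf{y}$, I interpolate between them by a chain $\mathbf{y}=\mathbf{z}^{(0)},\mathbf{z}^{(1)},\dots,\mathbf{z}^{(n)}=\mathbf{x}$, where $\mathbf{z}^{(k)}$ coincides with $\mathbf{x}$ in its first $k$ coordinates and with $\mathbf{y}$ in the remaining ones. Consecutive sequences $\mathbf{z}^{(k-1)}$ and $\mathbf{z}^{(k)}$ differ only in coordinate $k$, where $\mathbf{z}^{(k)}$ carries the larger value. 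Because ``at least'' is transitive, it suffices to prove that a single such increase weakly raises both the maximum and the minimum subset sum; chaining the $n$ inequalities then yields the theorem.

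So I fix $k$ and compare the runs of LS on $\mathbf{z}^{(k-1)}$ and $\mathbf{z}^{(k)}$, writing $\delta := z^{(k)}_k - z^{(k-1)}_k \ge 0$. The first observation is that the whole argument can be phrased on the \emph{multiset} of bin sums rather than on the physical bins: whenever LS inserts a number it augments a bin of currently minimal sum, and since all minimal bins carry the same sum, the resulting multiset is independent of how ties are broken. Thus the sorted-ascending vector of bin sums evolves deterministically under the rule ``replace the smallest entry $s$ by $s+c$ and re-sort,'' where $c$ is the incoming number, and this rule is identical in both runs at every step. The runs agree for the first $k-1$ insertions; at step $k$ both place the incoming number into the same (minimal) bin, after which the two multisets differ only in that one entry is larger by $\delta$ in the $\mathbf{z}^{(k)}$-run. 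Consequently, right after step $k$ the sorted bin-sum vector $\mathbf{a}$ of the $\mathbf{z}^{(k)}$-run dominates that of the $\mathbf{z}^{(k-1)}$-run, $\mathbf{b}$, coordinatewise: $a_i\ge b_i$ for all $i$.

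The crux is then an invariance lemma: if two sorted vectors satisfy $a_i\ge b_i$ for all $i$, and each has the same $c\ge 0$ added to its minimum entry and is re-sorted into $\mathbf{a}'$ and $\mathbf{b}'$, then still $a'_i\ge b'_i$ for all $i$. I would prove this through the elementary characterization of coordinatewise dominance of sorted vectors: $a_i\ge b_i$ for all $i$ if and only if $F_A(t)\le F_B(t)$ for all $t$, where $F_A(t)=|\{i: a_i\le t\}|$. Writing $\alpha=\min A\ge\beta=\min B$, the insertion changes the counting functions by $F_{A'}(t)=F_A(t)-\mathbbm{1}[\alpha\le t<\alpha+c]$ and $F_{B'}(t)=F_B(t)-\mathbbm{1}[\beta\le t<\beta+c]$, so the target $F_{A'}(t)\le F_{B'}(t)$ becomes $\mathbbm{1}[\beta\le t<\beta+c]-\mathbbm{1}[\alpha\le t<\alpha+c]\le F_B(t)-F_A(t)$. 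The left side is positive only for $t\in[\beta,\alpha)$ and never exceeds $1$; on that interval $F_A(t)=0$ (no element of $A$ is less than $\alpha=\min A$) while $F_B(t)\ge 1$ (since $\beta\le t$), so the right side is at least $1$ there, and elsewhere the left side is nonpositive. This is the step I expect to be the main obstacle, since re-sorting together with the index-based tie-breaking is exactly where dominance could plausibly be destroyed; the counting-function reformulation is what makes it go through cleanly.

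Applying this lemma to each of the remaining insertions $k+1,\dots,n$ (whose incoming numbers are equal in both runs), coordinatewise dominance of the sorted bin-sum vectors is preserved to the end of both executions. At termination the smallest entries satisfy $a_1\ge b_1$ and the largest satisfy $a_m\ge b_m$, i.e.\ both the minimum and the maximum subset sum are weakly larger for $\mathbf{z}^{(k)}$ than for $\mathbf{z}^{(k-1)}$. Chaining these inequalities over $k=1,\dots,n$ completes the proof.
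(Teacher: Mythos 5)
Your proposal is correct, and at the top level it follows the same strategy as the paper: reduce to a single-coordinate increase, then track the two executions of LS in lockstep and show that the bin sums of the larger run dominate those of the smaller run after every insertion. The difference lies in how that invariant is maintained. The paper (Lemma \ref{lem:mon-ls}) argues by induction on $n$, carrying an explicit permutation $\pi$ that matches bins of the two runs; the insertion step requires a case analysis, and in the nontrivial case (the new item lands in $P'_{\pi(j)}$ with $j\neq i$) three inequalities --- minimality in each run plus domination --- justify repairing $\pi$ by the transposition of $\pi(i)$ and $\pi(j)$. You instead phrase the invariant as coordinatewise dominance of the \emph{sorted} bin-sum vectors and prove its preservation via the counting-function criterion $F_A(t)\le F_B(t)$, where the entire case analysis collapses to the observation that the only dangerous interval is $[\beta,\alpha)$, on which $F_A(t)=0$ and $F_B(t)\ge 1$. (The two invariants are the same relation, multiset domination, in different clothing.) Your route makes explicit two things the paper leaves implicit: the chaining from the one-coordinate lemma to arbitrary $\mathbf{x}\geqq\mathbf{y}$, and indifference to tie-breaking --- since the multiset of sums evolves deterministically under ``add to a minimal bin,'' your argument covers LS under \emph{any} tie-breaking rule, whereas the paper's Case \#1 and Case \#2 silently rely on the fixed smallest-index rule to conclude that the two runs coincide on equal prefixes. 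The paper's permutation argument, in exchange, is more elementary and stays at the level of the bins the algorithm actually manipulates. One small step you should write out: the ``consequently'' after step $k$ --- that the sorted vectors satisfy $a_i\ge b_i$ when the multisets differ by $\delta$ in a single entry --- follows in one line from your own criterion, since $F_A(t)=F_B(t)-\mathbbm{1}[s\le t<s+\delta]$ with $s$ the receiving bin's sum before the insertion.
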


A variant of algorithm LS, which has superior approximation ratios for both maximizing the largest sum and minimizing the smallest sum, is called \emph{Longest Processing Time first} or LPT:
\begin{itemize}
\item Order the inputs from large to small, such that $x_1\geq \cdots \geq x_n$.
\item Run LS on the resulting sequence.
\end{itemize}
\begin{theorem}
\label{thm:mon-lpt}
Algorithm LPT for number-partitioning is value-monotone.
\end{theorem}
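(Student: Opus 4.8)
The plan is to reduce Theorem~\ref{thm:mon-lpt} to Theorem~\ref{thm:mon-ls} by isolating the only feature that distinguishes LPT from LS, namely the initial sorting step. Write $\mathbf{x}^{\downarrow}$ for the sequence obtained by sorting $\mathbf{x}$ in weakly decreasing order, so that by definition $\mathrm{LPT}(\mathbf{x},m) = \mathrm{LS}(\mathbf{x}^{\downarrow},m)$ and likewise for $\mathbf{y}$. Since Theorem~\ref{thm:mon-ls} already guarantees that LS is value-monotone with respect to the fixed processing order it is handed, it would suffice to establish that sorting preserves the domination relation: if $\mathbf{x}\geqq\mathbf{y}$ then $\mathbf{x}^{\downarrow}\geqq\mathbf{y}^{\downarrow}$. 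The whole argument then collapses to a single application of the previous theorem. Note that one cannot simply reuse Theorem~\ref{thm:mon-ls} on $\mathbf{x}$ and $\mathbf{y}$ directly, because LPT feeds LS the \emph{reordered} sequences, and LS is highly order-dependent; so this sorting-preservation step is exactly the content that must be supplied.

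The heart of the proof is therefore the following order-statistics fact, which I would state and prove as a lemma. For each $k\in\{1,\dots,n\}$, let $x^{\downarrow}_k$ denote the $k$-th largest entry of $\mathbf{x}$. I claim that $\mathbf{x}\geqq\mathbf{y}$ implies $x^{\downarrow}_k\geq y^{\downarrow}_k$ for every $k$. The clean way to see this is through a threshold count: because $x_i\geq y_i$ for all $i$, every index $i$ with $y_i\geq t$ also satisfies $x_i\geq t$, so for every real $t$ the count $|\{i: x_i\geq t\}|$ is at least $|\{i: y_i\geq t\}|$. Using the characterization $x^{\downarrow}_k=\max\{t: |\{i: x_i\geq t\}|\geq k\}$ and instantiating the count inequality at $t=y^{\downarrow}_k$ gives $|\{i: x_i\geq y^{\downarrow}_k\}|\geq |\{i: y_i\geq y^{\downarrow}_k\}|\geq k$, whence $x^{\downarrow}_k\geq y^{\downarrow}_k$, as desired.

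With the lemma in hand, the assembly is immediate: $\mathbf{x}\geqq\mathbf{y}$ yields $\mathbf{x}^{\downarrow}\geqq\mathbf{y}^{\downarrow}$, and applying Theorem~\ref{thm:mon-ls} to the two sorted sequences shows that both the maximum subset sum and the minimum subset sum of $\mathrm{LS}(\mathbf{x}^{\downarrow},m)=\mathrm{LPT}(\mathbf{x},m)$ are at least the corresponding sums of $\mathrm{LS}(\mathbf{y}^{\downarrow},m)=\mathrm{LPT}(\mathbf{y},m)$. The only place where I expect to have to be careful is the treatment of ties. When entries coincide, the sorted order is not unique and LPT must invoke a tie-breaking rule, so I would verify that the order-statistics lemma compares sorted \emph{values} (which are tie-break independent) and that the domination $\mathbf{x}^{\downarrow}\geqq\mathbf{y}^{\downarrow}$ holds entrywise regardless of how ties are resolved, so that feeding the result into Theorem~\ref{thm:mon-ls} is legitimate. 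Beyond this bookkeeping no genuine obstacle remains, since all the real work about how LS responds to larger inputs has already been discharged by Theorem~\ref{thm:mon-ls}.
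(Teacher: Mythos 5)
Your proposal is correct, and it takes a genuinely different route from the paper. The paper never proves an order-statistics fact; instead it works with a single increase $x_j \mapsto x_j+\epsilon$ and argues in two cases (Lemma~\ref{lem:mon-lpt}): if the increase does not change the relative order of the inputs, then the two LPT runs are exactly two LS runs on sorted sequences differing in one entry at the same position, so Lemma~\ref{lem:mon-ls} applies directly; if the order does change, the paper decomposes the increase into a chain of sub-increases, each raising $x_j$ only up to the next-higher input, applies Lemma~\ref{lem:mon-ls} to each step, and finishes by transitivity of the domination relation. You instead black-box Theorem~\ref{thm:mon-ls} and supply the one missing ingredient as a sorting lemma --- $\mathbf{x}\geqq\mathbf{y}$ implies $\mathbf{x}^{\downarrow}\geqq\mathbf{y}^{\downarrow}$ --- whose threshold-counting proof is correct, and your observation about ties is the right one: LS depends only on the sequence of values, and the sorted value sequence is unique regardless of how the sort breaks ties. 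What your route buys: it handles an arbitrary coordinate-wise increase in one shot, with no case analysis on whether the sorted order changes and no decomposition into sub-increases, so it is arguably cleaner and more self-contained. What the paper's route buys: its conclusion is the domination of the entire sum vector (the relation of Lemma~\ref{lem:mon-lpt}), a stronger invariant than max/min monotonicity, which is the currency its inductive machinery needs; your argument as written yields only the max/min conclusion demanded by the theorem, though it could be upgraded to domination by decomposing $\mathbf{x}^{\downarrow}\geqq\mathbf{y}^{\downarrow}$ into single-entry increases and chaining Lemma~\ref{lem:mon-ls} via transitivity instead of citing the theorem.
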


Not all number-partitioning algorithms are value-monotone. 
A third algorithm, called \emph{MultiFit} \citep{coffman1978application}, is known not to be value-monotone even without dependencies. For completeness, an example is presented after the theorems are proved.

\begin{proposition}
Algorithm MultiFit for number-partitioning is not value-monotone.
\end{proposition}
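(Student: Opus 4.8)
The statement is the negation of a universally quantified claim, so the natural route is to exhibit a single explicit counterexample: two input sequences $\mathbf{y}$ and $\mathbf{x}$ with $\mathbf{x}\geqq \mathbf{y}$ for which either the maximum or the minimum subset sum produced by MultiFit strictly \emph{decreases} when passing from $\mathbf{y}$ to $\mathbf{x}$. The plan is therefore not to prove a general structural fact but to design one instance and verify it by direct simulation.

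First I would fix the precise formulation of MultiFit, since its behaviour depends on the details. Recall that MultiFit sorts the inputs in decreasing order and then performs a binary search on a target bin-capacity $C$: for each candidate $C$ it runs First-Fit-Decreasing (FFD), placing each number into the lowest-indexed bin whose current sum plus the number does not exceed $C$, and it declares $C$ feasible if and only if FFD uses at most $m$ bins. The search is run for a fixed number $k$ of iterations between an initial lower and upper bound, and the returned partition is the FFD packing for the smallest capacity found to be feasible. I would state the chosen value of $k$ and the initial bounds explicitly, so the simulation is reproducible.

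The engineering step is to choose a small instance --- say $m=2$ or $m=3$ and a handful of numbers --- in which raising one input nudges the binary search onto a different feasible capacity $C$, and hence onto a structurally different FFD packing. The idea is that FFD's ``first bin that fits'' rule is sensitive to the capacity threshold: a slightly larger capacity can let an early large item share a bin with a later item, freeing the configuration so that the final makespan is actually smaller, or symmetrically so that the least-loaded bin ends up heavier. I would tabulate, for both $\mathbf{y}$ and $\mathbf{x}$, the sequence of capacities probed by the binary search, the FFD outcome at each one, the selected capacity, and the resulting bin sums, then read off the maximum and minimum sums to confirm that one of them strictly decreases while the coordinatewise inequality $\mathbf{x}\geqq \mathbf{y}$ is maintained.

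The main obstacle is the discreteness and discontinuity introduced by the binary search: MultiFit's output is not a continuous function of the inputs, and the capacity selected after $k$ iterations can jump as soon as an input crosses a feasibility threshold, so the counterexample must be tuned so that the increase from $\mathbf{y}$ to $\mathbf{x}$ straddles exactly such a threshold for the chosen $k$ and bounds. Getting these numbers to line up --- while keeping $\mathbf{x}\geqq \mathbf{y}$ and keeping the instance small enough to check by hand --- is the delicate part; once a candidate is found, the verification itself is only a routine trace of FFD on the few probed capacities.
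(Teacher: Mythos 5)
There is a genuine gap: your proposal is a plan for finding a counterexample, but it never exhibits one, and for a non-existence statement of this kind the explicit instance \emph{is} the entire mathematical content of the proof. Saying ``the engineering step is to choose a small instance'' and ``once a candidate is found, the verification is routine'' defers exactly the part that must be done; as written, nothing has been proved. The paper's proof consists precisely of such an instance: $m=3$ with inputs $44, 24, 24, 22, 21, 17, 8, 8, 6, 6$, where FFD with capacity $c=60$ packs everything into three bins of sum $60$ each (so MultiFit's largest sum is $60$); after decreasing the input $17$ to $16$, FFD with capacity $60$ needs four bins, the smallest feasible capacity jumps to $62$, and MultiFit's largest sum becomes $62$. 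Reading the two instances in the other direction (increasing $16$ to $17$) shows the maximum sum strictly decreases under a coordinatewise increase, contradicting value-monotonicity.

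A secondary issue is that your framing of the obstacle is aimed at the wrong mechanism. You propose to engineer the counterexample around the truncation of the binary search after a fixed number $k$ of iterations, so that the increase ``straddles a threshold for the chosen $k$ and bounds.'' In the paper's (standard) formulation, MultiFit returns the \emph{smallest} capacity for which FFD succeeds --- with integer inputs this is exact and independent of any iteration budget --- and the anomaly comes from a different source entirely: FFD's feasibility at a fixed capacity is not monotone in the item sizes (shrinking an item can force FFD to use \emph{more} bins, because the large item $44$ can no longer be paired so as to leave room for the small items). A counterexample built on search truncation would only refute a particular implementation choice, not the algorithm itself, and would be a substantially weaker result.
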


All results are proved in \textbf{Section \ref{sec:value-mon}}.

While searching the literature on scheduling for monotonicity properties, the only one I could find is a different kind of monotonicity, called \emph{speed-monotonicity}, that is relevant when machines may have different speeds. This properety is briefly explained in \textbf{Section \ref{sec:related}}.

\section{Value Monotonicity}
\label{sec:value-mon}
Given an input sequence $\mathbf{x}$ and an integer $m\geq 2$, 
the output of a number-partitioning algorithm $A$ is denoted by $A(\mathbf{x},m)$. It is a sequence of $m$ subsets, $P_1,\ldots,P_m$.
For all $j\in[m]$, denote by $v(P_j)$ the sum of values in $P_j$.

To prove value-monotonicity, a slighly more general property is proved first.

\begin{definition}
A real vector $(s_1,\ldots,s_m)$ \emph{dominates} another vector $(t_1,\ldots,t_m)$
if  there is a permutation $\pi$ of $[m]$ such that, for all $i\in[m]$,
$s_{\pi (i)}\geq t_{i}$.
\end{definition}
Note that, if $\mathbf{s}$ dominates $\mathbf{t}$, then both $\max \mathbf{s} \geq \max \mathbf{t}$ and 
$\min \mathbf{s} \geq \min \mathbf{t}$.
Therefore, to prove that an algorithm is value-monotone, it is sufficient to prove that, when one input increases, 
the new vector $(v(P_1),\ldots,v(P_m))$ dominates the old one.

\subsection{List Scheduling}
Theorem \ref{thm:mon-ls} follows from the following.
\begin{lemma}
\label{lem:mon-ls}
Let $m\geq 2$ be a fixed integer,  $\mathbf{x} := (x_1,\ldots,x_j,\ldots,x_n)$ a sequence of positive numbers,
and $\mathbf{x'} := (x_1,\ldots,x_j+\epsilon,\ldots,x_n)$ for some $\epsilon>0$.

Let $(P_1,\ldots,P_m) := LS(\mathbf{x},m)$
and $(P_1',\ldots,P_m') := LS(\mathbf{x'},m)$.
Then, the set $\{v(P_i') ~|~ i\in[m]\}$
dominates the set $\{v(P_i) ~|~ i\in[m]\}$.
\end{lemma}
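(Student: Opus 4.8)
The plan is to track the effect of increasing $x_j$ to $x_j+\epsilon$ step by step through the execution of LS, comparing the run on $\mathbf{x}$ with the run on $\mathbf{x'}$. Since LS processes numbers one at a time, the two runs are identical for all inputs before position $j$. Thus I would set up an induction on the processing steps $k = j, j+1, \ldots, n$, maintaining at each step an invariant that relates the vector of current bin-sums in the $\mathbf{x'}$-run to the vector in the $\mathbf{x}$-run. The natural invariant to aim for is that, after processing input $k$, the multiset of bin-sums in the $\mathbf{x'}$-run \emph{dominates} the multiset of bin-sums in the $\mathbf{x}$-run, and moreover the ``extra'' mass is concentrated: the $\mathbf{x'}$-run's sorted sum-vector is obtained from the $\mathbf{x}$-run's by adding at most $\epsilon$ to exactly one coordinate (after sorting). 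This concentrated form of domination is what will let the induction go through, because plain domination is too weak to propagate across a single LS step.

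The base case is the step that processes $x_j$ itself: both runs place their $j$-th number into the bin with the currently smallest sum (same bin, since the histories agree up to here), and the $\mathbf{x'}$-run's chosen bin ends up exactly $\epsilon$ higher than the $\mathbf{x}$-run's, with all other bins equal. So after step $j$ the invariant holds with the surplus $\epsilon$ living in a single bin. For the inductive step I would assume that before processing input $k > j$ the two sum-vectors agree in every coordinate except one ``heavy'' bin in the $\mathbf{x'}$-run that exceeds its $\mathbf{x}$-counterpart by some $\delta \in (0,\epsilon]$, and then analyze where each run places $x_k$. The key point is that LS always adds $x_k$ to a currently-minimal bin; since the heavy bin is the one that is larger in the $\mathbf{x'}$-run, the placement decisions in the two runs can differ only in a controlled way, and I would argue by cases on whether the two runs choose the same bin or different bins.

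The main obstacle, and the crux of the whole argument, is the case analysis in the inductive step when the two runs diverge in their choice of bin. If both runs place $x_k$ in the same bin, the surplus $\delta$ either stays put or the invariant is trivially preserved. The delicate case is when the heavy bin in the $\mathbf{x'}$-run is \emph{not} minimal (because its extra $\delta$ pushed it above some other bin), so the $\mathbf{x'}$-run puts $x_k$ into a different bin than the $\mathbf{x}$-run does. Here one must show that the surplus simply ``moves'' to a (possibly new) single bin without ever splitting into two bins or exceeding $\epsilon$; concretely, I expect that after the step the bin that was heavy becomes equal to its $\mathbf{x}$-counterpart while the newly-chosen bin inherits a surplus that is again at most $\epsilon$. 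Verifying that the single-surplus-bin structure is genuinely preserved under every divergence pattern—rather than degrading into two mismatched bins, which would break the induction—is the part that requires the most care, and it is where the precise tie-breaking rule (smallest index) may need to be invoked to keep the two runs in lockstep.

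Once the invariant survives all $n$ steps, the final sum-vectors satisfy it, and in particular the $\mathbf{x'}$-vector dominates the $\mathbf{x}$-vector in the sense of the paper's definition, which is exactly the conclusion of Lemma \ref{lem:mon-ls}. The passage from domination to value-monotonicity (Theorem \ref{thm:mon-ls}) is then immediate from the remark already established in the excerpt, namely that domination implies both $\max$ and $\min$ weakly increase.
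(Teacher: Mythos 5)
Your high-level plan --- compare the two runs step by step and handle the case where they diverge in their bin choice by a case analysis --- is essentially the skeleton of the paper's proof (the paper phrases it as induction on $n$, which amounts to the same thing since LS processes inputs online). But the invariant you propose to carry through the induction is false, and the claim you use to justify it (``plain domination is too weak to propagate across a single LS step'') is exactly backwards. Counterexample to your concentrated invariant: take $m=2$, $\mathbf{x}=(2,\,2.5,\,10)$ and $\mathbf{x'}=(3,\,2.5,\,10)$ (so $j=1$, $\epsilon=1$). The run on $\mathbf{x}$ yields bin sums $(12,\,2.5)$, while the run on $\mathbf{x'}$ yields $(3,\,12.5)$: at the divergence step (the third input) the surplus $\epsilon=1$ splits as $0.5+0.5$ over the two bins. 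No matching of bins makes the two sum-vectors agree in all but one coordinate --- the order-preserving matching gives differences $(0.5,\,0.5)$, and the other matching gives a negative difference. So the single-surplus-bin structure, which you yourself flag as the crux (``rather than degrading into two mismatched bins, which would break the induction''), genuinely does degrade, and your induction fails at precisely the point you identified as delicate. In general the surplus can keep splitting across more and more bins as further divergences occur; the only structure that survives is that all matched differences remain non-negative (and sum to $\epsilon$).

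Fortunately, that weaker statement --- plain multiset domination --- is all that is needed for the lemma, and it \emph{does} propagate across a single LS step; this is what the paper proves. The key is to let the matching permutation change during the induction instead of demanding a rigid structure: writing $v$ for the bin sums just before the last input $x_n$ is inserted, if run 1 inserts $x_n$ into $P_i$ and run 2 inserts it into $P'_{\pi(j)}$ with $j\neq i$, then minimality of $P'_{\pi(j)}$ in run 2 gives $v(P'_{\pi(i)})\geq v(P'_{\pi(j)})$, domination gives $v(P'_{\pi(j)})\geq v(P_j)$, and minimality of $P_i$ in run 1 gives $v(P_j)\geq v(P_i)$; chaining these inequalities shows that swapping $\pi(i)$ with $\pi(j)$ restores domination after the insertion. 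Note also that the tie-breaking rule, which you anticipated needing, plays no role in this argument. If you replace your concentrated invariant with plain domination and use this swap argument in your divergence case, your proof goes through and coincides with the paper's.
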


\begin{proof}
The proof is by induction on the number of inputs $n$. For $n=1$ the claim is trivial. 
Suppose the claim is true for $n-1$ inputs. Now there are some $n$ inputs. Consider two cases.

\textbf{Case \#1:}
The value that becomes larger is $x_n$.
Then in the second run, LS partitions the first $n-1$ inputs in exactly the same way as in the first run.
Therefore, when the new input $x_n+\epsilon$ arrives, the situation in the second run is exactly the same as in the first run, and algorithm LS inserts this new input into the same bin into which $x_n$ was inserted in the first run. Therefore, the claimed domination holds with the identity permutation.

\textbf{Case \#2:}
The value that becomes larger is one of $x_1,\ldots, x_{n-1}$.
By the induction assumption, the claim holds for the instance $x_1,\ldots, x_{n-1}$. 
Denote by $v^{k}$ the function that maps a part into its value after iteration $k$ of LS. By the induction assumption,
the set  $\{v^{n-1}(P_i') ~|~ i\in[m]\}$
dominates the set $\{v^{n-1}(P_i) ~|~ i\in[m]\}$.
Let $\pi$ be the permutation such that, for all $i\in [m]$,
$v^{n-1}(P'_{\pi(i)})\geq v^{n-1}(P_i)$.

It remains to show that domination holds after $x_n$ is added. That is, that
the set  $\{v^{n}(P_i') ~|~ i\in[m]\}$
dominates the set $\{v^{n}(P_i) ~|~ i\in[m]\}$. 
Suppose that, in the first run, LS inserted $x_n$ into $P_i$ (for some $i\in[m]$).
Consider two subcases.

\emph{Case 2.1:} In the second run, LS inserts $x_n$ into $P'_{\pi(i)}$.
Then, 
\begin{align*}
v^{n}(P'_{\pi(i)}) &= 
v^{n-1}(P'_{\pi(i)}) + x_n
\\
&\geq 
v^{n-1}(P_{i}) + x_n && \text{by domination}
\\
&=
v^{n}(P_{i}),
\end{align*}
and the other $m-1$ bins do not change in iteration $n$, so domination holds with the same permutation $\pi$.

\emph{Case 2.2:} In the second run, LS inserts $x_n$ into  $P'_{\pi(j)}$ for some $j\neq i$.
By definition of LS, this means that $v^{n-1}(P'_{\pi(j)})$ is a smallest element in the vector $P'$. In particular,
\begin{align*}
v^{n-1}(P'_{\pi(i)})
\geq
v^{n-1}(P'_{\pi(j)}).
\end{align*}
By domination,
\begin{align*}
v^{n-1}(P'_{\pi(j)}) \geq v^{n-1}(P_j).
\end{align*}
Since in the first run, LS inserted $x_n$ into $P_i$, its sum was smallest then, that is,
\begin{align*}
v^{n-1}(P_j) \geq v^{n-1}(P_i).
\end{align*}
Combining these three inequalities gives 
\begin{align*}
v^{n-1}(P'_{\pi(j)})
&\geq
v^{n-1}(P_i),
\\
v^{n-1}(P'_{\pi(i)})
&\geq
v^{n-1}(P_j).
\end{align*}
Therefore,
\begin{align*}
v^{n}(P'_{\pi(j)}) &= 
v^{n-1}(P'_{\pi(j)}) + x_n
\\
&\geq 
v^{n-1}(P_{i}) + x_n
\\
&=
v^{n}(P_{i}),
\\
v^{n}(P'_{\pi(i)}) &= 
v^{n-1}(P'_{\pi(i)})
\\
&\geq 
v^{n-1}(P_{j})
\\
&=
v^{n}(P_{j}).
\end{align*}
So domination holds with a permutation $\pi'$ which is derived from $\pi$ by swapping $\pi(i)$ with $\pi(j)$.
\end{proof}

\subsection{Longest Processing Time}
Theorem \ref{thm:mon-lpt} follows from the following.
\begin{lemma}
\label{lem:mon-lpt}
Let $m\geq 2$ be a fixed integer,  $\mathbf{x} := (x_1,\ldots,x_j,\ldots,x_n)$ a sequence of positive numbers,
and $\mathbf{x'} := (x_1,\ldots,x_j+\epsilon,\ldots,x_n)$ for some $\epsilon>0$.

Let $(P_1,\ldots,P_m) := LPT(\mathbf{x},m)$
and $(P_1',\ldots,P_m') := LPT(\mathbf{x'},m)$.
Then, the set $\{v(P_i') ~|~ i\in[m]\}$
dominates the set $\{v(P_i) ~|~ i\in[m]\}$.
\end{lemma}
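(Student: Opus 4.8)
The plan is to exploit the fact that LPT is just LS applied to the sorted input, so that Lemma~\ref{lem:mon-ls} can be reused once the reordering induced by sorting is understood. Write $\mathbf{a}$ for the sequence obtained by sorting $\mathbf{x}$ in weakly decreasing order, and $\mathbf{a}'$ for the sequence obtained by sorting $\mathbf{x}'$. By the definition of LPT we have $LPT(\mathbf{x},m)=LS(\mathbf{a},m)$ and $LPT(\mathbf{x}',m)=LS(\mathbf{a}',m)$, so it suffices to show that the value-multiset of $LS(\mathbf{a}',m)$ dominates that of $LS(\mathbf{a},m)$.

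The first and conceptually central step is to control how the \emph{sorted} vector changes when a single entry is increased. The multiset underlying $\mathbf{a}'$ is obtained from that underlying $\mathbf{a}$ by replacing one value $x_j$ with $x_j+\epsilon$, so there is a bijection between the two multisets matching every element to a weakly larger one (the identity on the unchanged elements, and $x_j\mapsto x_j+\epsilon$). I would first record the elementary fact that such a matching forces the order statistics to increase coordinatewise, that is $a_i'\geq a_i$ for every $i\in[n]$: for each $i$, the $i$ largest entries of $\mathbf{a}$ are matched to $i$ \emph{distinct} entries of $\mathbf{a}'$, each weakly larger than the corresponding entry of $\mathbf a$ and hence $\geq a_i$, so $\mathbf{a}'$ has at least $i$ entries that are $\geq a_i$, whence $a_i'\geq a_i$. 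This is exactly where the ``promotion'' of the increased element to an earlier position in the sorted order is absorbed, and it is the only part of the argument that genuinely differs from the LS case.

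With both sorted vectors having the same length $n$ and satisfying $\mathbf{a}'\geqq\mathbf{a}$, the remaining step is a reduction to Lemma~\ref{lem:mon-ls} via transitivity of domination. I would interpolate between $\mathbf{a}$ and $\mathbf{a}'$ by raising one coordinate at a time: set $\mathbf{c}^0:=\mathbf{a}$ and, for $i=1,\ldots,n$, let $\mathbf{c}^{i}$ be obtained from $\mathbf{c}^{i-1}$ by replacing its $i$-th coordinate $a_i$ with $a_i'$, so that $\mathbf{c}^n=\mathbf{a}'$. Consecutive sequences $\mathbf{c}^{i-1}$ and $\mathbf{c}^{i}$ differ in exactly one position, where the value weakly increases, so Lemma~\ref{lem:mon-ls} applies (trivially when $a_i'=a_i$) and shows that the value-multiset of $LS(\mathbf{c}^{i},m)$ dominates that of $LS(\mathbf{c}^{i-1},m)$. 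Finally I would note that domination is transitive: if $s_{\pi(i)}\geq t_i$ and $t_{\rho(i)}\geq u_i$ for all $i$, then $s_{\pi(\rho(i))}\geq u_i$ for all $i$. Chaining the $n$ dominations then yields that $LS(\mathbf{a}',m)$ dominates $LS(\mathbf{a},m)$, as required.

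The only genuine obstacle is the reordering step in the second paragraph; the rest is the mechanical reduction above. I expect the care to be needed precisely in the coordinatewise claim $a_i'\geq a_i$ when there are ties among the input values, which is why I would argue it through the matching/order-statistic bound rather than by tracking the physical position to which the increased element migrates.
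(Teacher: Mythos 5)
Your proof is correct, but it takes a genuinely different decomposition than the paper's. The paper keeps the increase attached to the original item $x_j$ and splits it \emph{in the value dimension}: $x_j$ is raised in stages, each stage bringing it just up to the next-higher input, so that each stage preserves the relative order of the inputs and hence changes exactly one coordinate of the sorted sequence; Lemma~\ref{lem:mon-ls} is applied per stage and the results are chained by transitivity of domination. You instead compare the two fully sorted vectors directly, prove the order-statistics fact $a_i'\geq a_i$ for all $i$, and then interpolate \emph{in the index dimension}, raising one coordinate at a time from $\mathbf{a}$ to $\mathbf{a}'$ and applying Lemma~\ref{lem:mon-ls} once per coordinate, again chaining by transitivity. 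Both routes rest on the same two ingredients (Lemma~\ref{lem:mon-ls} plus transitivity of domination, which you, unlike the paper, actually verify), but your version buys some robustness: the matching argument for $a_i'\geq a_i$ handles ties among inputs explicitly, whereas the paper's phrase about the relative order ``not changing'' quietly relies on the fact that swapping equal values does not alter the value sequence fed to LS --- a point that needs care precisely when $x_j$ becomes equal to, and then passes, another input. Two small remarks: first, your intermediate sequences $\mathbf{c}^i=(a_1',\ldots,a_i',a_{i+1},\ldots,a_n)$ are in fact themselves sorted in decreasing order (since $a_i'\geq a_i\geq a_{i+1}$), so each step of your chain is even an LPT instance, though your argument does not need this; second, Lemma~\ref{lem:mon-ls} is stated for a strictly positive increase $\epsilon>0$, so the steps with $a_i'=a_i$ must indeed be dismissed as trivial (identical inputs, identity permutation) rather than covered by the lemma, as you correctly note.
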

\begin{proof}
If the increase in $x_j$ does not change the relative order between the inputs, then LPT runs exactly like LS on the ordered input sequence, so the claim is implied by Lemma \ref{lem:mon-ls}.

Otherwise, the increase in $x_j$ can be decomposed into several smaller increases, where in each increase, it becomes equal to the next-higher input, and in the final increase, it attains its final larger value.
The claim holds for each sub-increase by Lemma \ref{lem:mon-ls}. Since domination is a transitive relation, it holds for the grand increase too.
\end{proof}

\subsection{MultiFit}
\citet{coffman1978application} developed another algorithm for number-partitioning.
It uses as a subroutine an algorithm called \emph{First Fit Decreasing (FFD)}, originally developed for the bin-packing problem. FFD is parametrized by a positive number $c>0$ called the \emph{bin-capacity}.
\begin{enumerate}
\item Order the inputs from large to small.
\item Open a new empty bin, $B$.
\item For each number $x_i$ in the sequence:
\begin{itemize}
\item If the number can be added into $B$ such that the sum remains at most $c$ --- insert $x_i$ into $B$.
\item Else, skip $x_i$.
\end{itemize}
\item If some inputs remain, return to step 2.
\end{enumerate}
MultiFit uses binary search to find the smallest $c$ such that 
FFD manages to pack all input numbers into at most $m$ bins. Then, it returns the resulting $m$-partition.

\citet{huang2021algorithmic}[Examples 5.1, 5.2] have already shown an example in which MultiFit is not value-monotone. For completeness, a different example is presented below.

\begin{example}[MultiFit algorithm anomalies; based on \citet{coffman1978application}]
Suppose $m=3$ and the input sequence is:
\begin{align*}
44, 24, 24, 22, 21, 17, 8, 8, 6, 6.
\end{align*}
With capacity $c=60$, FFD packs all inputs into 3 bins:
\begin{itemize}
\item $44 + 8 + 8 = 60$
\item $24 + 24+ 6 + 6 = 60$
\item $22 + 21 + 17 = 60$
\end{itemize}
Clearly, packing into 3 bins is impossible with any smaller capacity, so MultiFit returns the above partition, in which the largest sum is $60$.

Now, suppose input \#6 decreases from 17 to 16, so the input sequence is:
\begin{align*}
44, 24, 24, 22, 21, 16, 8, 8, 6, 6.
\end{align*}
Now, with capacity $c=60$, FFD needs 4 bins:
\begin{itemize}
\item $44 + 16 = 60$
\item $24 + 24+ 8 = 56$
\item $22 + 21 + 8 + 6 = 57$
\item $6$
\end{itemize}
so MultiFit must increase the capacity. The smallest capacity for which FFD packs all inputs into 3 bins is 62:
\begin{itemize}
\item $44 + 16 = 60$
\item $24 + 24+ 8+6 = 62$
\item $22 + 21 + 8 + 6 = 57$.
\end{itemize}
Thus, the largest sum increases, so MultiFit is not value-monotone.
\qed
\end{example}

\section{Related Work: Speed Monotonicity}
\label{sec:related}
There is another monotonicity property, that is related to algorithms for \emph{uniform-machines scheduling}. In this setting, the $m$ machines may have different \emph{speeds}. When a process is scheduled on machine $i$, its actual run-time is its basic processing-time divided by the machine speed.
\begin{definition}
An algorithm $A$ for uniform-machines scheduling is called \emph{speed-monotone}
if, for any $i\in[m]$, if the speed of machine $i$ increases while all other inputs remain the same, then the total processing-time allocated to $i$ weakly increases.
\end{definition}
Speed-monotonicity is particularly important when the speeds are private information of the machine owners, and it is required to incentivize the owners to reveal the true speeds. 
Such truthful algorithms were discussed by 
\citet{archer2001truthful,auletta2004deterministic,ambrosio2004deterministic,andelman2005truthful,kovacs2005fast}. It is clearly different than the value-monotonicity property studied in the present paper.

\section{Future Work}
The results in this note raise various questions for future work.

\paragraph{1.} What other number-partitioning algorithms are value-monotone? Particularly interesting are the \emph{differencing method} of \citet{karmarkar1982differencing} and its variants, such as \emph{Largest Differencing Method (LDM)}, \emph{Pairwise Differencing Method (PDM)} and \emph{Balanced LDM} \citep{yakir1996differencing}. Are they value-monotone?

\paragraph{2.} Do LS and LPT remain value-monotone for more complex problems? Particularly, are they value-monotone for uniform-machines scheduling? Are they value-monotone for number-partitioning with cardinality constraints on the subsets, such as the \emph{balanced number partitioning}  problem \citep{coffman1984note,tsai1992asymptotic} or the \emph{$k$-partitioning}  problem \citep{babel1998thek,he2003kappa}?

\paragraph{3.} Define a number-partitioning algorithm $A$ as \emph{subset-count-monotone} if increasing the number of allowed subsets ($m$) weakly \emph{decreases} the largest and the smallest subset-sums. 
List Scheduling is not subset-count-monotone when there are dependencies between jobs; this can be shown in Example \ref{exm:ls}.
If the number of machines increases from $3$ to $4$,
then the largest sum \emph{increases} from $121$ to $150$.
What if there are no dependencies --- are LS, LPT and other number-partitioning algorithms subset-count-monotone?

\bibliographystyle{apalike}
\bibliography{../erelsegal-halevi}

\end{document}